\documentclass[12pt,a4paper]{amsart}

\usepackage[latin1]{inputenc}     
\usepackage{amsmath,amssymb,amsthm}
\usepackage{latexsym}
\usepackage{amsfonts}
\usepackage{bbm,bm,dsfont} 
\usepackage{color} 
\usepackage{graphicx}

\numberwithin{equation}{section} 

\theoremstyle{definition}
\newtheorem{proposition}{Proposition}

\newtheorem{remark}{Remark}
\newtheorem{theorem}{Theorem}
\newtheorem{lemma}{Lemma}
\newtheorem{corollary}{Corollary}



\setlength{\textwidth}{17cm}
\addtolength{\textheight}{2.0cm}
\addtolength{\voffset}{-2.4cm}
\addtolength{\hoffset}{-2.0cm}
\addtolength{\footskip}{2.0cm}




\newcommand{\cal}{\mathcal}




\newcommand{\sfe}{\mathsf{E}}

\newcommand{\sfm}{\mathsf{M}}
\newcommand{\sfp}{\mathsf{P}}

\newcommand{\N}{\mathbb N} 
\newcommand{\R}{\mathbb R} 
\newcommand{\Z}{\mathbb Z} 
\newcommand{\C}{\mathbb C} 
\newcommand{\T}{\mathbb T} 



\newcommand{\fii}{\varphi} 

\newcommand{\hi}{\mathcal{H}} 
\newcommand{\hil}{{\mathcal H}} 
\newcommand{\ki}{\mathcal{K}} 
\newcommand{\li}{\mathcal{L}} 


\newcommand{\lh}{\mathcal{L(H)}} 
\renewcommand{\th}{\mathcal{T(H)}} 
\newcommand{\eh}{\mathcal{E(H)}} 

\newcommand{\tr}[1]{\mathrm{tr}\left[#1\right]} 
\def\<{\langle} 
\def\>{\rangle} 
\newcommand{\ket}[1]{|#1\rangle} 
\newcommand{\kb}[2]{|#1 \rangle\langle #2|} 
\newcommand{\ip}[2]{\left\langle #1 | #2 \right\rangle} 
\newcommand{\no}[1]{\left\|#1\right\|} 


\newcommand{\Eo}{\mathsf{E}} 
\newcommand{\Fo}{\mathsf{F}} 
\newcommand{\Go}{\mathsf{G}}
\newcommand{\Qo}{\mathsf{Q}} 
\newcommand{\Po}{\mathsf{P}} 
\newcommand{\Mo}{\mathsf{M}} 
\newcommand{\No}{\mathsf{N}} 

\newcommand{\bo}[1]{\mathcal{B}\left(#1\right)} 
\def\d{{\mathrm d}} 










\begin{document}

\title[Number and phase]{Number and phase:\\ {\tiny complementarity and joint measurement uncertainties}}

\author{Pekka Lahti}
\address{Turku Centre for Quantum Physics, Department of Physics and Astronomy, University of Turku, FI-20014 Turku, Finland}
\email{pekka.lahti@utu.fi}
\author{Juha-Pekka Pellonpää}
\address{Turku Centre for Quantum Physics, Department of Physics and Astronomy, University of Turku, FI-20014 Turku, Finland}
\email{juhpello@utu.fi}
\author{Jussi Schultz}
\address{Turku Centre for Quantum Physics, Department of Physics and Astronomy, University of Turku, FI-20014 Turku, Finland}
\email{jussi.schultz@gmail.com}

\begin{abstract}
We show that number and canonical phase (of a single mode optical field)  are complementary observables. We also bound the measurement uncertainty region for their approximate joint measurements.

\end{abstract}

\maketitle


\section{Introduction}

Analogously to position and momentum of a quantum object,
number and phase of a single mode optical field are often considered as  an example of a pair of observables which is complementary and for which the uncertainty relations put  severe limitations both for preparations and measurements. 
However, since there is no phase shift covariant spectral measure solution to the quantum phase problem it has remained a challenge 
to formulate the exact content of these intuitive ideas for this pair of observables. 

The notion of complementarity, which goes back to the 1927 Como lecture of Niels Bohr \cite{Bohr1928} and which was strongly advocated also by Wolfgang Pauli \cite{Pauli1933}, is often discussed only rather vaguely  and mostly in connection with Werner Heisenberg's uncertainty relations \cite{Heisenberg1927}. However, the notion of mutual exclusiveness which is associated with the idea of  complementarity has rather straightforward independent formulations in quantum mechanics, and,  like uncertainty, it has both probabilistic and measurement theoretical aspects. Along with Bohr \cite{Bohr1935}, we say that two observables are complementary if all the instruments (measurements) which allow their unambiguous definitions are mutually exclusive. The notion of mutual exclusiveness of measurements is easily expressed with respect to the order structure of the set of quantum effects, sharp or unsharp. Following Pauli \cite{Pauli1933}, one may also say that two observables are probabilistically complementary if certain predictions concerning the measurement outcomes of these observables are mutually exclusive. In addition, with the notion of value complementarity of two observables one often refers to the case where sharply defined  value (exact knowledge) of one observable implies uniform distribution (complete ignorance) on the values of the other observable. These notions have obvious expressions in terms of the measurement outcome probabilities of quantum mechanics.
Straightforward formulations of the three versions of complementarity  have been proposed and studied, for instance, in  \cite{Kraus1987,OQP}.

Concerning number and phase, it is, perhaps, well known that they are probabilistically complementary as well as value complemenary, see, for instance \cite[Proposition 16.2 and 16.3]{Kirja},  but it has remained an open question if among the phase shift covariant phase observables there is any which would be complementary with the number \cite{BLPY2001}. This question is now settled in Section \ref{Cchapt} where it is shown that the canonical phase and number form a complementary pair.

Complementary observables are necessarily incompatible, that is, they cannot be measured jointly. This leads one to study their approximate joint measurements, a topic which has gained a substantial clarification in recent years. 
Rather than digging in the extensive history of  the topic, we refer to  the relevant chapters of the monograph \cite{Kirja}. 
In Section \ref{MURchapt} we  follow the ideas and methods initiated in  \cite{Ludwig1983,RFW2004} and further developed, for instance, in
\cite{BLWJMP2014,BLW2014,BKWerner2016}, to bound the measurement uncertainty domain of the complementary pair of number and canonical phase. 

Throughout the paper we use freely the standard notions and terminology of Hilbert space  quantum mechanics. 
Yet, we start with a short account of the main terminology and the basic results concerning the canonical phase observable.

\section{Basic notions}

Let $\hil$ be a Hilbert space, $\{\vert n \rangle \mid n\in\N\}$ an orthonormal basis of $\hil$, and $N=\sum_{n=0}^\infty n\vert n\rangle\langle n\vert$ the corresponding number operator. 
Let $\lh$ and $\th$ denote, respectively,  the sets of bounded and trace class operators on $\hi$.
We also let $\mathcal S(\hi)\subset\th$ denote the set of positive, trace one operators (states).
We denote by $\No:2^{\N}\to\lh$ the spectral measure of $N$ and call it the {\em number observable}. With any observable, like $\No$, we let $\No_\rho$ denote the probability measure $Y\mapsto\No_\rho(Y)=\tr{\rho\No(Y)}$ defined by the observable and a state $\rho\in\mathcal S(\hi)$.

Let $\bo{[0,2\pi)}$ be the Borel sigma algebra of $[0,2\pi)$.
By a  {\em phase observable} we mean any normalized positive operator measure (semispectral measure)  $\sfe:\bo{[0,2\pi)}\to\lh$ which is
covariant under the phase shifts generated by the number observable, that is, satisfies the 
 condition $e^{i\theta N}\sfe(X)e^{-i\theta N} = \sfe(X\dot{+}\theta)$ for all $\theta\in[0,2\pi)$ and $X\in\bo{[0,2\pi)}$, where $\dot{+}$ denotes addition modulo $2\pi$.
The structure of such observables is completely known, see, for instance, \cite{Holevo,JPVK,Kirja}.
Among them there is the one referred to
the {\em canonical} phase observable, which we denote  by  $\Phi:\bo{[0,2\pi)}\to\lh$ and which has the effects
\begin{equation}\label{phase}
\Phi(X) = \sum_{m,n=0}^\infty \int_X  e^{i(m-n)\theta}\, \frac{d\theta}{2\pi} \vert m \rangle\langle n\vert.
\end{equation}

There are several properties which distinguish $\Phi$ as the {\em canonical} phase  among all the phase observables $\sfe$. Without entering the whole list of such properties,\footnote{A reader interested in those properties of $\Phi$  may 
check the list of 19 items of \cite[Sect.\ 4.8]{JPVK} together with some further properties  \cite{HePe,JPJussi}.
 } we mention here only the fact that, up to unitary equivalence, the canonical phase is the only phase observable which generates number shifts: 
 $V^{(k)}\No(Y+k)(V^{(k)})^*=\No(Y)$, where $V^{(k)}=\int_0^{2\pi}e^{ik\theta}\,d\Phi(\theta)$ are the cyclic moment operators of $\Phi$.  
We recall also that the spectrum of the effect $\Phi(X)$, $0\ne\Phi(X)\ne I$,  is the whole interval $[0,1]$ with no eigenvalues. 
In particular, for any $\theta\in[0,2\pi)$ and for any $\epsilon>0$, the (operator) norm of the effect $\Phi\big((\theta-\epsilon,\theta+\epsilon)\cap[0,2\pi)\big)$ equals one.
Thus, for each point $\theta\in[0,2\pi)$ there is a sequence of unit vectors $(\psi_i)_{i\in\N}$ such that the probability measures $X\mapsto\ip{\psi_i}{\Phi(X)\psi_i}$ tend, with increasing $i$, to the point measure $\delta_\theta$ at $\theta$.
In such a case, the number probabilities $|\ip{\psi_i}{n}|^2$ tend to zero for all $n$.
 Observing, in addition, that in the number states $\ket n$ the phase distribution is uniform,
$\ip{n}{\Phi(X)|n}=\int_X\frac{d\theta}{2\pi}=\ell(X)$, 
 the probabilistic and the value complementarity of the pair $(\No,\Phi)$ become obvious.

As well-known, number $\No$ and phase $\Phi$ are incompatible observables, that is, they cannot be measured jointly. Indeed, since $\No$ is a spectral measure,
their joint measurement $\Mo$ would necessarily be of the product form, that is,   $\sfm(n,X)=\kb nn\Phi(X)=\Phi(X)\kb nn$ for any $n\in\N,X\in\bo{[0,2\pi)}$
(see, for instance, \cite[Proposition 4.8]{Kirja}). But this would imply that $\Phi(X)=\ell(X)\,I$, which contradicts \eqref{phase}.

Though  $\Phi$ and $\No$ have no joint observable, there are observables $\Mo:\bo{[0,2\pi)\times\N}\to\lh$ having either $\Phi$ or $\No$ as a margin, that is, 
either $\Mo_1=\Phi$, with $\Mo_1(X)=\Mo(X\times\N)$, or $\Mo_2=\No$, with
$\Mo_2(Y)=\Mo([0,2\pi)\times Y)$. In either case the joint observable is a smearing of the exact margin.
Indeed, if $\No=\Mo_2$, then $\Mo(X\times Y)= \Mo_1(X)\No(Y)$ (cf.\ above) and each $\Mo_1(X)$ is a function of $N$ so that
 $\Mo(X\times Y)=\sum_{n\in Y} p(X,n)\kb nn$, with a Markov  kernel $\bo{[0,2\pi)}\times\N\ni(X,n)\mapsto p(X,n)\in[0,1]$.
On the other hand,
if $\Mo_1=\Phi$, then again  there is a kernel $p:[0,2\pi)\times 2^{\N}\to[0,1]$ such that $\Mo$ is obtained as
 \begin{equation}\label{tarkkavaihe}
 \Mo(X\times Y)=\int_X p(\theta,Y)\, d\Phi(\theta),\footnote{We recall that this integral simply means that   for each state $\rho$, 
 $\tr{\rho\Mo(X\times Y)}=\int_Xp(\theta,Y)\,d\Phi_\rho(\theta)$, the integral of the (measurable) function $\theta\mapsto p(\theta,Y)$ with respect to the probability measure $\Phi_\rho$.}
 \end{equation}
so that, in particular,  for each $Y\in 2^{\N}$, $\Mo_2(Y)=\int_0^{2\pi}\,p(\theta,Y)\,d\Phi(\theta)$.
The structural similarity of the two cases is due to the fact that
 both $\Phi$ and $\No$ are rank-1 observables, for details, see \cite{JPJussi,JP2014}.

The above results contain also the following well-known facts.  In any of the sequential measurements  (in either order), 
if the first measurement is exact, that is, measures either $\No$ or $\Phi$, then any information 
on the other observable
coded in the initial state of the measured system is completely lost in the following precise sense: if, say, $\No$ is measured first, with an instrument $\mathcal I$, in a state $\rho$, then the subsequent phase probabilities  are $\tr{\mathcal I(\N)(\rho)\Phi(X)}= \tr{\rho\mathcal I(\N)^*(\Phi(X))},$  where the `distorted' phase effects $\mathcal I(\N)^*(\Phi(X))$ 
are  smearings of the number observable for some kernel $n\mapsto p(X,n)$. Similarly, if one first performs an exact phase measurement, with an instrument  $\mathcal J$, say, then the subsequent number probabilities are $\tr{\mathcal J([0,2\pi))(\rho)\kb nn}= \tr{\rho\mathcal J([0,2\pi))^*(\kb nn)},$  where the `distorted' number effects 
$\mathcal J([0,2\pi))^*(\kb nn)$ are smearings of the phase observable $\Phi$ with a kernel  $\theta\mapsto p(\theta,\{n\})$.

We now turn to study the complementarity of the number and the canonical phase.

\section{Complementarity of the pair $(\No,\Phi)$}\label{Cchapt}

As already pointed out,
the pair $(\No,\Phi)$ is known to be both probabilistically complementary  and value complementary, but it has remained an open question if they are also complementary.   
This question will now be settled with Theorem \ref{corollary} which shows that for each finite subset $Y\subset\N$ and $X\in\bo{[0,2\pi)}$, for which $\Phi(X)\ne I$, the greatest lower bound of the effects $\Phi(X)$ and $\No(Y)$ exists in the  partially ordered set of effects $\eh=\{E\in\lh\,|\, 0\leq E\leq I\}$ and equals the null effect, that is
\begin{equation}\label{komplementaarisuus}
 \Phi(X)\wedge\No(Y) =0.
\end{equation}
It is this relation which we take to express  the complementarity of the pair $(\No,\Phi)$ in the sense  that all the measurements which serve to define these observables are mutually exclusive.
In fact, if \eqref{komplementaarisuus} were not true, then for  some such $X$ and $Y$ there would be
an effect $E$ below both $\No(Y)$ and $\Phi(X)$, so that, in any state $\rho$,
the probability $\tr{\rho E}$  would also be a common lower bound for the corresponding number and the phase probabilities. Thus, with measuring the effect $E$ in any state  one would also get information from the effects $\No(Y)$ and $\Phi(X)$ in that state.   
Relation \eqref{komplementaarisuus} excludes such measurements.

 The order structure of the set of effects is known to be quite complicated when compared with the order structure of the set of projections. However,
a characterization of pairs of effects $E,\,F\in\eh$ for which $E\land F$ exists  has been obtained   \cite{DUetal_2006}, and, in particular,  
it is known that if one of them is a projection then their greatest lower bound always exists  \cite[Corollary  3.1]{DUetal_2006}.
Therefore, $\Phi(X)\land\No(Y)$ exists for any $X\in\bo{\T}$ and $Y\subset\N$, and it remains to be shown that all these meets are zero whenever $Y$ is a finite set and $X$ such that $\ell(X)<1$ (i.e.\ $\Phi(X)\ne I$). Clearly, such a result depends on the explicit properties of the number and the canonical phase.

From now on we identify the phase interval $[0,2\pi)$ (addition modulo $2\pi$) with the torus $\T$ in the usual way through the map $\theta\mapsto e^{i\theta}$, 
denoting still by $d\ell(\theta)=\frac{d\theta}{2\pi}$ the normalized   
measure on  $\T$. 
Let $\Qo$ be the canonical spectral measure of the Hilbert space $\widetilde\hi=L^2(\T)$ and let $\{e_k\,|\, k\in\Z\}$ be its Fourier basis,  that is, 
$e_k(\theta)=e^{-ik\theta}$. 
Let $P_\N$ be the projection $\sum_{n=0}^\infty\kb{e_n}{e_n}$. The Naimark projection of $\Qo$ onto $P_\N(\widetilde\hi)$, that is, the map
 $X\mapsto P_\N\Qo(X)|_{P_\N(\widetilde\hi)}$ is exactly of the form \eqref{phase}. In fact, $\Qo$ is the minimal Naimark dilation of $\Phi$ \cite[Theorem 8.1]{Kirja}. 

We identify $\hi$ with the subspace  $P_\N(\widetilde{\hi})$ of  $\widetilde\hi$   
via the isometry $V:\ket n\mapsto e_n,$
so that $P_\N=VV^*$ and
 $$
 \Phi(X)=V^*\Qo(X)V=V^*P_\N\Qo(X)V$$ 
 for all $X\in\bo{\T}$.

\begin{remark} 
Let  $\sfp$ be the spectral measure with the (atomic) projections $\kb{e_k}{e_k},$ $k\in\Z$.
In   \cite[Example 4.2]{PekkaKari} it was shown that the pair $(\Qo,\sfp)$  of $L^2(\T)$ is complementary, that is, 
 $\Qo(X)\wedge\sfp(Y)=0$ 
 for all $X\in\bo{\T}$, for which $\Qo(X)\ne I_{\widetilde\hi},$
 and for all finite $Y\subset\Z$.  
 The corresponding result for the position-momentum pair  $(\Qo,\sfp)$ of $L^2(\R)$ is well known, see, e.g., \cite[Proposition 8.2]{Kirja}. 
Though $\Phi(X)=V^*\Qo(X) V$ and $\No(Y)=V^*\Po(Y) V$ (= $\Po(Y)$, for $Y\subset\N$), the noncommutativity of $P_\N$ and $\Qo(X)$ prevents one to conclude the disjointness of the effects $\Phi(X)$ and $\No(Y)$ directly from 
the disjointness of the projections  $\Qo(X)$ and $\Po(Y)$.
\end{remark}

\begin{lemma}\label{lemma}
Let $\alpha\ge 0$ and $X\in\bo{\T}$ such that $\Phi(X)\ne I$.
Then $\alpha\kb{0}{0}\le \Phi(X)$ implies $\alpha=0$.
\end{lemma}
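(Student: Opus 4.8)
The plan is to transfer the operator inequality to the Hilbert space $\widetilde\hi=L^2(\T)$ via the isometry $V\colon\ket n\mapsto e_n$ and then to defeat any putative $\alpha>0$ with a one--parameter family of test vectors built from an explicit outer function. Recall that $\Phi(X)=V^*\Qo(X)V$, that $\Qo(X)$ is multiplication by the indicator $\chi_X$, that $V\hi=P_\N(\widetilde\hi)=\ov{\lin}\{e_n\mid n\in\N\}$, and that $e_0\equiv 1$ since $e_n(\theta)=e^{-in\theta}$. Hence, for $\psi\in\hi$ with $u:=V\psi$, one has $\ip{0}{\psi}=\langle e_0,u\rangle=\int_\T u\,d\ell$ and $\ip{\psi}{\Phi(X)\psi}=\int_X|u|^2\,d\ell$, so that $\alpha\kb00\le\Phi(X)$ is equivalent to
\begin{equation*}
\alpha\,\Bigl|\int_\T u\,d\ell\Bigr|^2\le\int_X|u|^2\,d\ell\qquad\text{for every }u\in P_\N(\widetilde\hi).
\end{equation*}
It therefore suffices to exhibit, whenever $0<\ell(X)<1$, a sequence in $P_\N(\widetilde\hi)$ along which the left--hand ratio blows up (the case $\ell(X)=0$, i.e.\ $\Phi(X)=0$, being trivial).

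To this end I would fix $\delta\in(0,1)$, set $w_\delta:=\chi_{\T\setminus X}+\delta\,\chi_X$ --- a bounded function with $\delta\le w_\delta\le1$ and $\log w_\delta=(\log\delta)\chi_X\in L^1(\T)$ --- and take the outer function
\begin{equation*}
F_\delta(z)=\exp\Bigl(\tfrac1{2\pi}\int_0^{2\pi}\tfrac{e^{it}+z}{e^{it}-z}\,\log w_\delta(t)\,dt\Bigr),\qquad|z|<1,
\end{equation*}
for which $F_\delta\in H^\infty$, $|F_\delta|=w_\delta$ a.e.\ on $\T$, and $F_\delta(0)=\exp\bigl(\int_\T\log w_\delta\,d\ell\bigr)=\delta^{\ell(X)}$. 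Because the basis at hand is $e_n(\theta)=e^{-in\theta}$, the relevant Hardy space is the conjugate one, so I would pass to the boundary function $u_\delta(\theta):=\ov{F_\delta(e^{i\theta})}\in\ov{\lin}\{e^{-in\theta}\mid n\in\N\}=P_\N(\widetilde\hi)$; then $|u_\delta|=w_\delta$ and $\int_\T u_\delta\,d\ell=\ov{F_\delta(0)}=\delta^{\ell(X)}$. Substituting $u_\delta$ into the displayed inequality gives $\alpha\,\delta^{2\ell(X)}\le\int_X w_\delta^2\,d\ell=\ell(X)\,\delta^2$, that is $\alpha\le\ell(X)\,\delta^{2(1-\ell(X))}$. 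Since $\Phi(X)\ne I$ forces $\ell(X)<1$, letting $\delta\to0^+$ yields $\alpha\le0$, hence $\alpha=0$.

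The step I expect to be the genuine obstacle is the construction of these test vectors. The naive choice --- the phase--localized unit vectors $\psi_i$ recalled before the lemma, whose distributions $\Phi_{\psi_i}$ tend to a point mass located off $X$ --- is useless here, because for them every number amplitude $\ip{n}{\psi_i}$, in particular $\ip{0}{\psi_i}$, tends to $0$, so they give no lower bound on $\alpha$. The role of the outer function $F_\delta$ is precisely to prevent the mean $\int_\T u_\delta\,d\ell=\delta^{\ell(X)}$ from collapsing as fast as the mass $\int_X|u_\delta|^2\,d\ell=\ell(X)\delta^2$ concentrated on $X$; the resulting ratio $\delta^{2\ell(X)-2}/\ell(X)$ then diverges exactly because $\ell(X)<1$. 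A secondary, purely technical point to handle with care is the conjugation: one must place $F_\delta$ in the Hardy space matching the index convention $e^{-in\theta}$ (equivalently, build it from $\bar z$), so that $u_\delta$ genuinely lies in $P_\N(\widetilde\hi)=V\hi$.
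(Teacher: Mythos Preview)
Your argument is correct and takes a genuinely different route from the paper's. Both proofs transfer the question to $L^2(\T)$ via the Naimark dilation and ultimately rest on Hardy--space theory, but they invoke opposite sides of that theory. The paper proceeds operator--theoretically: from $\alpha\kb{e_0}{e_0}\le[\Qo(X)P_\N]^*[\Qo(X)P_\N]$ it extracts, via a Douglas--type factorization, a vector $\eta$ with $\Qo(X)\eta$ lying in the conjugate Hardy space and vanishing on $\T\setminus X$; the uniqueness theorem (a nonzero Hardy function cannot vanish on a set of positive measure) then forces $\alpha=0$. You instead argue by explicit test vectors: the outer function $F_\delta$ with prescribed boundary modulus $w_\delta=\chi_{\T\setminus X}+\delta\chi_X$ supplies $u_\delta\in P_\N(\widetilde\hi)$ for which the ratio $\bigl|\int u_\delta\bigr|^2\big/\int_X|u_\delta|^2=\delta^{2\ell(X)-2}/\ell(X)$ diverges as $\delta\to0$, directly killing any positive $\alpha$. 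Your approach is more constructive and arguably more transparent --- one sees exactly which vectors defeat the inequality --- while the paper's abstract factorization argument dovetails with the induction machinery of the subsequent lemma. The conjugation issue you flag (placing $F_\delta$ in the Hardy space matching the convention $e_n(\theta)=e^{-in\theta}$) is handled correctly.
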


\begin{proof} 
Suppose that 
$\alpha\kb{0}{0}\le \Phi(X)=V^*\Qo(X)V,$   
that is, $\alpha\kb{e_0}{e_0}\le V\Phi(X)V^*=P_\N\Qo(X)P_\N = [\Qo(X)P_\N]^*[\Qo(X)P_\N]$
and note that $\Phi(X)\ne I$ if and only if $\Qo(X)\ne I_{\widetilde\hi}$ if and only if $\ell(X)<1$.
Let $\ki=\overline{\Qo(X)P_\N(\widetilde{\hi})}$. 
Define an operator $D\in\li\big(\widetilde\hi\big)$ by
$D\big(\Qo(X)\psi\big)=\sqrt{\alpha}\<e_0|\psi\>e_0$, $\psi\in P_\N(\widetilde{\hi})$, and $D\fii=0$, $\fii\in\ki^\perp$. Indeed, $D$ is clearly linear and well defined since, if
$\Qo(X)\psi=\Qo(X)\psi'$, $\psi,\,\psi'\in P_\N(\widetilde{\hi})$, i.e.\ $\Qo(X)\psi_{-}=0$, $\psi_-=\psi-\psi'$, then
$$
0\le \|D(\Qo(X)\psi_-)\|^2=\<\psi_-|\alpha e_0\>\<e_0|\psi_-\>
\le\<\psi_-|P_\N\Qo(X)P_\N\psi_-\>
=\<\psi_-|\Qo(X)\psi_-\> =0
$$
so that $D(\Qo(X)\psi)=D(\Qo(X)\psi')$. Similarly, $\|D(\Qo(X)\psi)\|\le\|\Qo(X)\psi\|$, $\psi\in P_\N(\widetilde{\hi})$, showing that $D$ is bounded and thus extends 
to the whole $\widetilde\hi$. Since the range of $D$ is $\C e_0$, one has   
$D=\kb{e_0}{\eta}$ for some $\eta\in \widetilde\hi$.
In addition, since $D\Qo(X)P_\N=\sqrt{\alpha}\kb{e_0}{e_0}$,
$$
\alpha\kb{e_0}{e_0}=[\Qo(X)P_\N]^*D^*D[\Qo(X)P_\N]=\kb{\eta'}{\eta'}
$$
where $\eta'=  
P_\N\Qo(X)\eta$ and also $\eta'=z\sqrt{\alpha}e_0$, $z\in\T$.
Now
$\<e_m|\Qo(X)\eta\>=
\<e_m|P_\N\Qo(X)\eta\>=
\<e_m|\eta'\>=0$ for all $m>0$ so that $\Qo(X)\eta=\sum_{n=0}^\infty c_n e_{-n}$ for some square summable sequence of complex numbers $c_n$, i.e.\ $\Qo(X)\eta$ is a Hardy function which vanishes on a set $\T\setminus X$ of measure $1-\ell(X)>0$.
As well known, a Hardy function
which vanishes on a set of positive measure is identically zero (see, e.g., \cite[Theorem 1]{Helson}). 
Therefore, 
  $\Qo(X)\eta=0$, $\eta'=0$, and $\alpha\kb{e_0}{e_0}=0$, yielding $\alpha=0.$ 
\end{proof}

\begin{lemma}\label{apulause}
Let $E\in\lh$ be a positive operator such that $\<n|E|n\>=0$ for all $n>r$ where $r\in\N$, and let $X\in\bo{\T}$ be such that $\Phi(X)\ne I$. Then $E\le\Phi(X)$ implies $E=0$.
\end{lemma}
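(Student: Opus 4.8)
The plan is to prove the statement by induction on $r$, with Lemma \ref{lemma} (the case $r=0$) as the base and the Toeplitz structure of $\Phi(X)$ exhibited in \eqref{phase} as the engine. First I would record the one thing positivity gives for free: since $E\ge 0$ and $\<n|E|n\>=0$ for $n>r$, necessarily $E\ket n=0$ for all such $n$, hence $\<m|E|n\>=0$ whenever $m>r$ or $n>r$; in other words $E$ is supported on the finite-dimensional subspace $\mathrm{span}\{\ket 0,\ldots,\ket r\}$.

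The core of the argument is as follows. Let $S\in\lh$ be the isometric shift $S\ket n=\ket{n+1}$ (equivalently $S^*=V^{(1)}$, the first cyclic moment operator of $\Phi$). Because the matrix elements $\<m|\Phi(X)|n\>=\int_X e^{i(m-n)\theta}\,\frac{d\theta}{2\pi}$ depend only on the difference $m-n$, one has $(S^*)^k\Phi(X)S^k=\Phi(X)$ for every $k\in\N$. On the other hand, since $E$ is supported on $\mathrm{span}\{\ket 0,\ldots,\ket r\}$, conjugation by $S^r$ collapses it to a single matrix element: $(S^*)^rES^r=\<r|E|r\>\,\kb{0}{0}$. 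As conjugation by a bounded operator preserves positivity and hence the order, applying the map $A\mapsto (S^*)^rAS^r$ to the hypothesis $E\le\Phi(X)$ yields $\<r|E|r\>\,\kb{0}{0}\le(S^*)^r\Phi(X)S^r=\Phi(X)$, and since $\Phi(X)\ne I$ Lemma \ref{lemma} forces $\<r|E|r\>=0$. By positivity this gives $E\ket r=0$, so $E$ is now supported on $\mathrm{span}\{\ket 0,\ldots,\ket{r-1}\}$ and the step may be repeated; after finitely many iterations $\<n|E|n\>=0$ for all $n\in\N$, whence $E=0$.

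I do not expect a genuine obstacle here; the point that needs care is to be sure the conjugation returns one to exactly the hypothesis of Lemma \ref{lemma} — a single rank-one effect at $\ket 0$ dominated by $\Phi(X)$ — and this is secured jointly by the positivity-driven support reduction and by the shift invariance $(S^*)^r\Phi(X)S^r=\Phi(X)$ of the Toeplitz operator $\Phi(X)$. Should one prefer to avoid the shift, an alternative (somewhat longer) route is to pass first to a nonzero rank-one spectral component $\lambda\kb{\phi}{\phi}\le E\le\Phi(X)$ with $\phi=\sum_{n=0}^r c_n\ket n$, and then rerun the proof of Lemma \ref{lemma} verbatim with $e_0$ replaced by $V\phi=\sum_{n=0}^r c_ne_n$: the vector $\Qo(X)\eta$ produced there is no longer a Hardy function, but $e^{ir\theta}\Qo(X)\eta$ is, and, vanishing on $\T\setminus X$ — a set of positive measure since $\Phi(X)\ne I$ — it is identically zero, forcing $\lambda=0$. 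The shift-conjugation argument is cleaner and is the one I would write up.
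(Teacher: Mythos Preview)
Your proof is correct and is essentially the same as the paper's: the paper conjugates by $W=\sum_{k=0}^\infty\kb{k}{k+r}$, which is exactly your $(S^*)^r$, and uses the identical Toeplitz/shift invariance $W\Phi(X)W^*=\Phi(X)$ together with Lemma~\ref{lemma} to kill $\<r|E|r\>$ and then induct. Your write-up is slightly more explicit about why the shift invariance holds, but the argument is the same.
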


\begin{proof}
The proof is by induction on $r$.
First we note that, by positivity, if $\<n|E|n\>=0$ for some $n$, then $\<m|E|n\>=\overline{\<n|E|m\>}=0$ for all $m\in\N$.
The condition $E\le\Phi(X)$ implies
$$
\<r|E|r\>\kb{0}{0}=WEW^*\le W\Phi(X)W^*=\Phi(X)
$$
where $W=\sum_{k=0}^\infty \kb{k}{{k+r}}$. From Lemma \ref{lemma} one gets $\<r|E|r\>=0$ and by induction
$\<n|E|n\>=0$ for all $n\in\N$, i.e.\ $E=0$.
\end{proof}

\begin{theorem}\label{corollary}
For any finite subset $Y$ of $\N$ and $X\in\bo{\T}$ such that $\Phi(X)\ne I$,
$$
 \Phi(X)\wedge\No(Y) =0.
$$
\end{theorem}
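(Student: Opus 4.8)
The plan is to establish the apparently stronger statement that the null operator is the \emph{only} positive operator lying below both $\Phi(X)$ and $\No(Y)$. Since $0$ is trivially a lower bound of any pair of positive operators, this at once gives $\Phi(X)\wedge\No(Y)=0$ and, as a byproduct, re-proves that the meet exists without invoking \cite[Corollary 3.1]{DUetal_2006}.

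Carrying this out: if $Y=\emptyset$ then $\No(Y)=0$ and the claim is trivial, so assume $Y\neq\emptyset$ and set $r=\max Y$. Then $\No(Y)=\sum_{k\in Y}\kb kk$ is a finite-rank projection, and in particular $\<n|\No(Y)|n\>=0$ whenever $n>r$. Let $E\ge 0$ satisfy both $E\le\Phi(X)$ and $E\le\No(Y)$. From $E\le\No(Y)$ together with $E\ge 0$ we get, for every $n>r$,
$$
0\le\<n|E|n\>\le\<n|\No(Y)|n\>=0,
$$
hence $\<n|E|n\>=0$ for all $n>r$. Now Lemma \ref{apulause} applies verbatim --- it is precisely the statement that a positive $E$ with $\<n|E|n\>=0$ for all $n>r$ and $E\le\Phi(X)$, $\Phi(X)\ne I$, must be $0$ --- so $E=0$, as desired. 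Thus the null effect is the only common lower bound of $\Phi(X)$ and $\No(Y)$ in $\eh$, which is the assertion.

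I do not expect any obstacle in this last argument: the entire weight of the theorem has been pushed into Lemmas \ref{lemma} and \ref{apulause}, and in particular into the use, inside the proof of Lemma \ref{lemma}, of the fact that a Hardy function vanishing on a subset of $\T$ of positive Lebesgue measure is identically zero. This is where the specific analytic structure of the canonical phase (as opposed to that of a general phase shift covariant observable) enters and makes the disjointness $\Phi(X)\wedge\No(Y)=0$ possible.
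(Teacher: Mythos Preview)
Your proof is correct and follows essentially the same route as the paper: handle $Y=\emptyset$ trivially, set $r=\max Y$, observe that $E\le\No(Y)$ forces $\<n|E|n\>=0$ for $n>r$, and then invoke Lemma~\ref{apulause}. The only (minor) difference is that you explicitly note that showing $0$ is the \emph{only} lower bound already yields existence of the meet, whereas the paper secures existence beforehand by citing \cite[Corollary~3.1]{DUetal_2006}; your remark is a small bonus but does not change the substance of the argument.
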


\begin{proof}
Clearly, the claim holds if $Y=\emptyset$ (i.e.\ $\No(Y)=0$) so that we assume that $Y$ is finite and non-empty.  
Assume that there is an effect $E$ such that $E\le \Phi(X)$ and $E\le \No(Y)$.
Thus, $r=\max Y\in\N$, $\No(Y)\le R=\sum_{n=0}^r\kb{n}{n}$, 
$\<n|E|n\>\le\<n|R|n\>=0$ for all $n>r$. Since also $E\leq\Phi(X)$, Lemma \ref{apulause} now implies that $E=0$, that is, 0 is the only lower bound of  $\Phi(X)$ and $\No(Y)$.
\end{proof}

We note that \eqref{komplementaarisuus} is equivalent with the seemingly weaker requirement that this condition holds for all singletons $Y=\{n\}$. Finally, we give bounds for  the joint predictability of number and phase. 

\begin{corollary}\label{Lenard}
For any $X\in\bo{\T}$, with $\ell(X)<1$, and for any  finite $Y\subset\N$,
$$
\sup_{\rho\in\mathcal S(\hi)}\left(\Phi_\rho(X)+\No_\rho(Y) \right) \leq 1+\sqrt{a_+}<2,
$$
where $a_+$ is the largest eigenvalue the (finite rank) operator $\No(Y)\Phi(X)\No(Y)$. 
\end{corollary}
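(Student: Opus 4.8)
The plan is to recognize that the supremum in question is an operator norm. Since $\Phi(X)+\No(Y)$ is a positive bounded operator, $\sup_{\rho\in\mathcal S(\hi)}\bigl(\Phi_\rho(X)+\No_\rho(Y)\bigr)=\sup_{\rho}\tr{\rho\bigl(\Phi(X)+\No(Y)\bigr)}=\|\Phi(X)+\No(Y)\|$. So the corollary reduces to the two claims $\|\Phi(X)+\No(Y)\|\le 1+\sqrt{a_+}$ and $a_+<1$, of which the first is the substantial one.

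For the norm bound I would use a dilation. Writing $E=\Phi(X)$ and $P=\No(Y)$ (a finite-rank orthogonal projection) and setting $T=\begin{pmatrix}\sqrt{E}\\ P\end{pmatrix}\colon\hi\to\hi\oplus\hi$, one has $T^*T=E+P$ and $TT^*=\begin{pmatrix}E & \sqrt{E}\,P\\ P\sqrt{E} & P\end{pmatrix}$, so that $\|E+P\|=\|T\|^2=\|TT^*\|$. Now decompose this block operator,
\[
TT^*=\begin{pmatrix}E & 0\\ 0 & P\end{pmatrix}+\begin{pmatrix}0 & \sqrt{E}\,P\\ P\sqrt{E} & 0\end{pmatrix},
\]
and use the triangle inequality. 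The block-diagonal summand has norm $\max(\|E\|,\|P\|)\le 1$, because $E\le I$ and $P\le I$; the off-diagonal summand is self-adjoint with square $\operatorname{diag}\bigl(\sqrt{E}\,P\sqrt{E},\,P E P\bigr)$, and both of these blocks have norm $\|PEP\|=\|\No(Y)\Phi(X)\No(Y)\|=a_+$, so the off-diagonal summand has norm $\sqrt{a_+}$. Hence $\|E+P\|\le 1+\sqrt{a_+}$.

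For $a_+<1$ I would argue by contradiction, invoking Theorem \ref{corollary}. If $a_+=1$ then, $\No(Y)\Phi(X)\No(Y)$ being positive and of finite rank, $a_+$ is attained as an eigenvalue: there is a unit vector $\phi\in\No(Y)\hi$ with $\No(Y)\Phi(X)\No(Y)\phi=\phi$, and since $\No(Y)\phi=\phi$ this gives $\ip{\phi}{\Phi(X)\phi}=1$. For a unit vector this forces $(I-\Phi(X))^{1/2}\phi=0$, i.e.\ $\Phi(X)\phi=\phi$, whence $\kb{\phi}{\phi}\le\Phi(X)$; as also $\kb{\phi}{\phi}\le\No(Y)$, the nonzero effect $\kb{\phi}{\phi}$ would be a common lower bound of $\Phi(X)$ and $\No(Y)$, contradicting $\Phi(X)\wedge\No(Y)=0$. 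So $a_+<1$, and therefore $1+\sqrt{a_+}<2$.

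The step I expect to be the main obstacle is the norm estimate itself. The direct approach — decomposing a unit test vector $\psi$ into its component in $\No(Y)\hi$ and the orthogonal one and bounding the three resulting terms of $\ip{\psi}{\Phi(X)\psi}$ using $\ip{\phi}{\Phi(X)\phi}\le a_+\|\phi\|^2$ and $\Phi(X)\le I$ — is too lossy: optimizing the resulting expression over the split only yields the bound $1+\frac{2s}{1-s}$ with $s=\sqrt{a_+/(4+a_+)}$, which is strictly larger than $1+\sqrt{a_+}$ (and does not even stay below $2$). The dilation is what preserves the exact interference between the two components, and the crux is the identification of the off-diagonal norm with $\|\No(Y)\Phi(X)\No(Y)\|^{1/2}=\sqrt{a_+}$, exactly the constant appearing in the statement.
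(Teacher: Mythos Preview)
Your proof is correct, and it takes a genuinely different route from the paper's.

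The paper does not estimate $\|\Phi(X)+\No(Y)\|$ directly. Instead it passes to the minimal Naimark dilation: on $\widetilde\hi=L^2(\T)$ both $\Qo(X)$ and $\Po(Y)$ are projections, and the paper simply quotes Lenard's result on the numerical range of a pair of projections to get $\ip{f}{\Qo(X)f}+\ip{f}{\Po(Y)f}\le 1+\sqrt{a_+}$, after checking that $\Po(Y)\Qo(X)\Po(Y)$ and $\No(Y)\Phi(X)\No(Y)$ have the same spectrum. Your argument stays in $\hi$ and avoids both the dilation to projections and the external reference: the block operator $T=\bigl(\begin{smallmatrix}\sqrt{E}\\ P\end{smallmatrix}\bigr)$ does all the work, and the identity $\|\sqrt{E}P\sqrt{E}\|=\|PEP\|$ (i.e.\ $\|AA^*\|=\|A^*A\|$ for $A=\sqrt{E}P$) pins the off-diagonal norm exactly to $\sqrt{a_+}$. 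This is more self-contained and in fact proves the general operator inequality $\|E+P\|\le 1+\|PEP\|^{1/2}$ for any effect $E$ and projection $P$, which is precisely what Lenard's bound specializes to. What the paper's route buys is access to the full structure of the numerical range of $(\Qo(X),\Po(Y))$, should one want more than just the sum bound.

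For the strictness $a_+<1$, your reduction via Theorem~\ref{corollary} is valid, but note that a shorter path is available (and is the one the paper takes): since $a_+=\ip{\phi}{\Phi(X)\phi}$ for some unit vector $\phi\in\No(Y)\hi$, the known fact that $\Phi_\rho(X)<1$ for every state $\rho$ whenever $\ell(X)<1$ (equivalently, $\Phi(X)$ has no eigenvalue $1$) gives $a_+<1$ immediately.
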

\begin{proof}
Considering $\Phi$ and $\No$ as the Naimark 
projections of $\Qo$ and $\Po$ on the subspace  $P_\N(\widetilde{\hi})$ of $L^2(\T)$, we have
$$
\sup_{\rho\in\mathcal S(\hi)}\left(\Phi_\rho(X)+\No_\rho(Y) \right) \leq \sup_{\rho\in\mathcal S(\widetilde\hi)}\left(\Qo_\rho(X)+\Po_\rho(Y) \right).
$$
Using the results of \cite{Lenard1972} the numerical range $\{(\ip{f}{\Po(Y)f},\ip{f}{\Qo(X)f})\,|\, f\in\widetilde\hi, \no{f}=1\}$
of  the pair of projections $\Po(Y), \Qo(X)$ can completely be determined. Since $\Po(Y)\land\Qo(X)=0$, the point $(1,1)$ is now excluded from this range.
It suffice to recall here that the numerical range is a convex subset of $[0,1]\times[0,1]$ \cite[Proposition 1]{Lenard1972} 
and that for any unit vector  
$f\in L^2(\T)$, the sum
$\ip{f}{\Qo(X) f}+\ip{f}{\Po(Y)f}$ is bounded by the number $1+\sqrt{a_+}$, where $a_+$ is
the maximal eigenvalue of the positive finite rank operator $\Po(Y)\Qo(X)\Po(Y)$
\cite[Proposition 5]{Lenard1972}. Note that the spectra of the operators $\No(Y)\Phi(X)\No(Y)$ and $\Po(Y)\Qo(X)\Po(Y)$ are identical. Since $\tr{\rho\Phi(X)}<1$ for any state $\rho\in\mathcal S(\hi)$ (see, for instance, \cite[Proposition 16.2]{Kirja}), the eigenvalue $a_+$ is strictly less than one.
\end{proof}

\section{Errors in approximate joint measurements of $\No$ and $\Phi$}\label{MURchapt}
We study next  the necessary errors appearing in an approximate joint measurement of number and canonical phase. 
We follow the idea, expounded, for instance, in \cite[pp.\ 197-8]{Ludwig1983}, that ``measurement error" is to be found by comparing a ``real" measurement outcome statistics with the desired one. We take this to mean the comparison of the actual measurement outcome distributions with the ideal ones.
Such a comparison can be based on various methods. Here we follow the approach initiated in \cite{RFW2004} and further developed in \cite{BLWJMP2014,BLW2014}
where the error is quantified  using the Wasserstein distance between probability measures.
For simplicity, we use only the Wasserstein-2 distances and   fix the metrics to be the arc distance on $\T$, $d(\theta,\theta') = \min_{n\in\Z}\vert \theta - \theta' - 2\pi n\vert$, and the standard distance on $\N$, $d(m,n) = \vert m-n\vert$. 

Let  $\Mo_1:\bo{\T} \to\lh$ 
and $\Mo_2:\bo{\N}\to\lh$
be any two observables (semispectral measures) which  approximate  measurements of $\Phi$ and $\No$, respectively. 
The error in approximating $\Phi$ by $\Mo_1$ is now defined as
\begin{equation}
d(\Mo_1,\Phi) = \sup_{\rho}D((\Mo_1)_\rho,\Phi_\rho), 
\end{equation}
where $D((\Mo_1)_\rho,\Phi_\rho)$ is the Wasserstein-2 distance between the probability measures $(\Mo_1)_\rho$ and $\Phi_\rho$, that is,
$$
D((\Mo_1)_\rho,\Phi_\rho)=\inf_\gamma  \sqrt{\int_{\T\times\T} d(\theta,\theta')^2\, d\gamma(\theta,\theta')},
$$
where the infimum is taken over all couplings (joint probabilities) $\gamma:\bo{\T\times\T}\to[0,1]$ of $(\Mo_1)_\rho$ and $\Phi_\rho$. Similarly,  one defines the error $d(\Mo_2,\No)$. Actually, the existence of a minimizing  coupling is known \cite[Theorem 4.1]{Villani}.

\begin{remark}\label{kalibrointi}
Canonical phase $\Phi$ is not a spectral measure. Still, as pointed out above, it resembles a spectral measure in many respects. In particular, the notion of calibration error
$$
d^c(\Mo_1,\Phi)= \lim_{\epsilon\to 0}\sup\{D((\Mo_1)_{\rho},\delta_x)\,|\, D(\Phi_\rho,\delta_x)\leq\epsilon\}
$$
makes sense, along with all spectral measure observables,  also to canonical phase and one has $d^c(\Mo_1,\Phi)\leq d(\Mo_1,\Phi)$. Moreover, if $\Mo_1$ is a smearing of $\Phi$ in the sense of a convolution, that is, $\Mo_1=\mu*\Phi$ for a probability measure $\mu$, then 
$d^c(\Mo_1,\Phi)^2=d(\Mo_1,\Phi)^2=\int_{\T}\,d(\theta,0)^2\,d\mu = \int_{\T}\,\min_{n\in\Z}|\theta-2\pi n|^2\,d\mu =\int_{-\pi}^{\pi}\theta^2d\mu \equiv
\mu[2]$. Similarly, if $\Mo_2=\nu*\No$ for some probability measure $\nu$, then  $d^c(\Mo_2,\No)^2=d(\Mo_2,\No)^2
=\sum_{k=0}^\infty d(k,0)^2\nu(\{k\})=
\sum_{k=0}^\infty k^2\nu(\{k\})\equiv\nu[2]$  \cite[Lemmas 7, 11]{BLWJMP2014}.
\end{remark}
  
 For an approximate joint measurement of $\Phi$ and $\No$, the approximators  $\Mo_1$ and $\Mo_2$ must be compatible, that is, margins of a joint observable
 $\Mo:\bo{\T\times\N}\to\lh$.\footnote{See \cite[Theorem 11.1]{Kirja} for several alternative definitions.} 
 The basic problem is thus to  characterize the joint measurement error set 
\begin{equation}\label{eqn:MU}
\mathsf{MU}(\T\times \N)= \{ (d(\Mo_1, \Phi), d(\Mo_2, \No) ) \mid \Mo:\bo{\T\times\N} \to\lh \},
\end{equation}
where $\Mo_j$ are the cartesian margins of $\Mo$. Here we use the notation $\mathsf{MU}(\T\times \N)$ to indicate explicitly the value space of the approximate joint observables.

The incompatibility of $\Phi$ and $\No$ implies that
 the point $(0,0)$ is not in the set $\mathsf{MU}(\T\times \N)$.   On the other hand, if one of the errors is zero, then $\Mo$ is a smearing of the exact margin  $\Mo_1$ or $\Mo_2$. From the below Proposition \ref{MURtulos} we then conclude that if $d(\Mo_1,\Phi)=0$, that is, $\Mo_1=\Phi$, then $d(\Mo_2,\No)$ cannot be finite.
On the other hand, if $\Mo_2=\No$, then $\pi/\sqrt3\le d(\Mo_1,\Phi)\le\pi$ where the lower bound is attained with the kernel $p_k=\ell$, $k\in\N$, and the upper bound with 
$p_k=\delta_\beta$, $k\in\N$, where $\beta\in[0,2\pi)$.

The semigroup structure of the outcome space of the number measurements has thwarted our attempts to determine directly the set \eqref{eqn:MU}. However, we can still bound this set by enlarging the joint values set $\T\times\N$ to $\T\times\Z$, that is, studying instead of \eqref{eqn:MU} the set  $\mathsf{MU}(\T\times \Z)$. This case reduces to the case of position $\Qo$ and momentum $\Po$ (or angle and ($\Z$ -)number) on $\widetilde\hi=L^2(\T)$ studied in great detail  in
 \cite{BKWerner2016}.

 Let $\Go^\sigma:\bo{\T\times \Z}\to \mathcal{L}(\widetilde\hi)$ be the covariant phase space observable generated by a state $\sigma\in\mathcal S(\widetilde\hi)$ so that
 its margins are the smeared position and momentum observables $\Qo_\sigma *\Qo$ and $\Po_\sigma *\Po$, 
 smeared by the position and momentum distributions  $\Qo_\sigma$ and $\Po_\sigma$ in state $\sigma$,
 respectively  \cite{Werner1984,BKWerner2016}.
 The observable $\Eo^\sigma:\bo{\T\times \Z}\to \lh$, defined as
\begin{equation}\label{eqn:joint}
\Eo^\sigma(X\times Y) = V^* \Go^\sigma(X\times Y) V,
\end{equation}
has then the smeared phase $\Eo^\sigma_1 = \Qo_\sigma * \Phi$ and smeared number $\Eo^\sigma_2 = \Po_\sigma* \No$ as its margins. By Remark \ref{kalibrointi}, the errors now reduce to the preparation uncertainties  
of $\Qo$ and $\Po$ in state $\sigma$  
$$  
d(\Eo^\sigma_1, \Phi) =  \sqrt{\Qo_\sigma[2]}    
\quad {\rm and}\quad 
d(\Eo^\sigma_2, \No) =  \sqrt{\Po_\sigma[2]}. 
$$

The following proposition  bounds the error set  $\mathsf{MU}(\T\times \N)$ by the bounds of the larger set $\mathsf{MU}(\T\times \Z)$.
\begin{proposition}\label{MURtulos}
Let $\Fo:\bo{\T\times \Z}\to\lh$ be an observable such that $d(\Fo_2,\No) <\infty$. Then there exists a state operator  
$\sigma$ on  $\widetilde\hi$, such that 
$$
d(\Eo^\sigma_1, \Phi) \leq d(\Fo_1,\Phi) \qquad \text{and} \qquad d(\Eo^\sigma_2, \No) \leq d(\Fo_2,\No),
$$
where $\Eo^\sigma$ is given by \eqref{eqn:joint}. In particular, the boundary curve for the error set $\mathsf{MU}(\T\times \Z)$,  which includes the set  $\mathsf{MU}(\T\times \N)$, is the same as for  $\Qo$ and $\Po$ on $\widetilde\hi$, as characterised in \cite{BKWerner2016}.
\end{proposition}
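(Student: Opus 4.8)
The plan is to reduce an arbitrary joint approximator $\Fo$ to one of the special covariant form $\Eo^\sigma$ by averaging over the symmetry group, after which the two errors can be read off from Remark~\ref{kalibrointi}. The first step is to symmetrise $\Fo$ with respect to the phase shifts: I would set
\[
\Fo^{\mathrm{av}}(X\times Y)=\frac{1}{2\pi}\int_0^{2\pi}e^{-i\theta N}\,\Fo\big((X\dot{+}\theta)\times Y\big)\,e^{i\theta N}\,d\theta ,
\]
which is again a normalised positive operator measure on $\T\times\Z$ and is now phase-shift covariant. Since the arc metric on $\T$ is rotation invariant, $\No$ is invariant under the phase shifts and $\Phi$ is covariant under them, and since the squared Wasserstein-$2$ distance is jointly convex (and integrals of couplings are couplings), one gets $d(\Fo^{\mathrm{av}}_1,\Phi)\le d(\Fo_1,\Phi)$ and $d(\Fo^{\mathrm{av}}_2,\No)\le d(\Fo_2,\No)$; in particular $d(\Fo^{\mathrm{av}}_2,\No)<\infty$ still. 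Hence I may assume from the outset that $\Fo$ is phase-shift covariant, so that $\Fo_1$ is a covariant phase observable and $\Fo_2$ is a smearing of $\No$ by a Markov kernel whose second moments in the number states are bounded by $d(\Fo_2,\No)^2$.

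The crucial step is then to carry out the remaining averaging, over the number shifts. This cannot be done directly on $\hi$, where the number shifts act only by isometries, but becomes possible after passing to $\widetilde\hi=L^2(\T)$. A phase-shift covariant $\Fo$ is the $V$-compression of a phase-shift covariant observable $\widetilde\Fo$ on $\widetilde\hi$: one extends the generating density operators from $P_\N(\widetilde\hi)$ to all of $\widetilde\hi$, choosing the extension on $(I-P_\N)(\widetilde\hi)$ to be ``canonical'' so that $\widetilde\Fo_1$ has the same error against $\Qo$ as $\Fo_1$ has against $\Phi$. On $\widetilde\hi$ the number shifts act unitarily, as the Fourier multipliers, commuting with $\Qo$ and translating $\Po$; I would average $\widetilde\Fo$ over this $\Z$-action with the help of an invariant mean on $\Z$, obtaining an observable on $\widetilde\hi$ that is covariant under the whole phase space group $\T\times\Z$. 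By the classification of covariant phase space observables \cite{Werner1984,BKWerner2016} this observable must equal $\Go^\sigma$ for a unique $\sigma\in\mathcal S(\widetilde\hi)$, and then $\Eo^\sigma=V^*\Go^\sigma V$. The hypothesis $d(\Fo_2,\No)<\infty$ is used exactly here: it gives a uniform bound on the second moments of the number margins being averaged, so that $\Po_\sigma[2]\le d(\Fo_2,\No)^2$ by linearity of the invariant mean on finite truncations together with the lower semicontinuity of the second moment; in particular $\sigma$ has finite momentum variance and $\Eo^\sigma$ is well defined. The same bookkeeping for the phase margin, using that the calibration error of a covariant phase observable is insensitive to the extension, gives $\Qo_\sigma[2]\le d(\Fo_1,\Phi)^2$. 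Finally Remark~\ref{kalibrointi} identifies $d(\Eo^\sigma_1,\Phi)=\sqrt{\Qo_\sigma[2]}$ and $d(\Eo^\sigma_2,\No)=\sqrt{\Po_\sigma[2]}$, which yields the two inequalities.

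For the ``in particular'' assertion: each $\Eo^\sigma$ is itself an admissible joint observable with value space $\T\times\Z$, so the curve $\{(\sqrt{\Qo_\sigma[2]},\sqrt{\Po_\sigma[2]})\mid\sigma\in\mathcal S(\widetilde\hi)\}$ lies in $\mathsf{MU}(\T\times\Z)$, while the inequalities just proved say that every point of $\mathsf{MU}(\T\times\Z)$ dominates, componentwise, a point of that curve. Hence the lower boundary of $\mathsf{MU}(\T\times\Z)$ coincides with the preparation uncertainty boundary of the pair $(\Qo,\Po)$ on $L^2(\T)$, which is the curve determined in \cite{BKWerner2016}. The inclusion $\mathsf{MU}(\T\times\N)\subseteq\mathsf{MU}(\T\times\Z)$ is immediate, since a joint observable on $\T\times\N$ becomes one on $\T\times\Z$ by assigning zero weight to the negative integers, and, the metric on $\N$ being the restriction of the one on $\Z$ and the number distributions living on $\N$, neither error changes.

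The step I expect to be the main obstacle is the error comparison in the second paragraph, where two points require care. First, the number-shift group is not compact, so the $\Z$-average produced by an invariant mean is a priori only finitely additive; one must verify that it is in fact a genuine (countably additive) observable, and it is here that the moment bound coming from $d(\Fo_2,\No)<\infty$ does the work of controlling the tails. Second, the errors on $\hi$ probe only states supported on $P_\N(\widetilde\hi)$, whereas the number shifts carry support off this subspace; the lift $\widetilde\Fo$ and the subsequent averaging must therefore be arranged so that the error of $\Eo^\sigma=V^*\Go^\sigma V$ over states on $P_\N(\widetilde\hi)$ is still bounded by $d(\Fo_j,\cdot)$, and not merely by the possibly larger error of $\widetilde\Fo$ over all of $\widetilde\hi$. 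I would handle this by passing to the calibration-error description of Remark~\ref{kalibrointi}: for a covariant phase observable the residual spread extracted by calibration depends only on the generating density, not on how it is extended off $P_\N(\widetilde\hi)$, so compressing back costs nothing, and the corresponding statement for the number margin is the elementary fact that a smeared number observable attains its calibration error at the number eigenstates.
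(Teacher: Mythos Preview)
Your outline follows the same three-step architecture the paper uses---lift from $\hi$ to $\widetilde\hi$, average over the phase space $\T\times\Z$, then compress back---and you correctly identify the two technical pressure points (normalisation of the invariant-mean average, and the mismatch between states on $\hi$ and on $\widetilde\hi$). The differences from the paper are mainly in execution, and one of them is a genuine gap.

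First, a structural comparison. The paper does \emph{not} pre-average over the phase shifts on $\hi$; it lifts the raw $\Fo$ directly via the explicit formula
\[
\Mo(X\times Y)=V\Fo(X\times Y)V^*+\sum_{n=1}^\infty \ell(X)\,\langle n|\Fo_2(-Y)|n\rangle\,|e_{-n}\rangle\langle e_{-n}|,
\]
and then averages over the full group $\T\times\Z$ at once with an invariant mean. The point of this specific ``reflected'' extension is that $d(\Mo_2,\Po)=d(\Fo_2,\No)$ holds \emph{exactly} (calibration on the eigenstates $e_k$ sees the reflected copy for $k<0$), while on the positive sector $\mathcal S_+(\widetilde\hi)$ the first margin coincides with $\Fo_1$. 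Your lift (``extend the generating density operators canonically'') is left unspecified, and the promise that it can be chosen so that $\widetilde\Fo_1$ has the \emph{same} error against $\Qo$ as $\Fo_1$ has against $\Phi$ is not obviously redeemable: any state with support meeting $(I-P_\N)\widetilde\hi$ probes the extension, and there is no reason the Wasserstein error there should be bounded by $d(\Fo_1,\Phi)$.

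Second, and this is the real gap: your proposal to resolve the subspace issue via calibration does not close the argument. You are right that compressing $\Go^\sigma$ back to $\Eo^\sigma$ costs nothing, since both errors equal $\sqrt{\Qo_\sigma[2]}$. But to bound $\sqrt{\Qo_\sigma[2]}$ by $d(\Fo_1,\Phi)$ you still need to control the averaging step, and here calibration on $\widetilde\hi$ is of no help: the near-eigenstates of $\Qo$ (localised $L^2$ functions) do not lie in $\mathcal S_+(\widetilde\hi)$, so the calibration error $d^c(\Go^\sigma_1,\Qo)$ is computed on states where the behaviour of your lift is uncontrolled. The paper's way out is not calibration but the \emph{state-by-state} monotonicity
\[
d\bigl(\mathbf p_\rho^{\overline\Mo_1},\mathbf p_\rho^{\Qo}\bigr)\le d\bigl(\mathbf p_\rho^{\Mo_1},\mathbf p_\rho^{\Qo}\bigr)\qquad\text{for every }\rho\in\mathcal S(\widetilde\hi),
\]
obtained by applying the invariant mean to the Kantorovich dual formulation of $W_2$. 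This inequality can then be restricted to $\rho\in\mathcal S_+(\widetilde\hi)$, where both sides translate back to the $\Phi$-errors on $\hi$, giving $d(\Eo^\sigma_1,\Phi)\le d(\Fo_1,\Phi)$ without ever needing the full error $d(\widetilde\Fo_1,\Qo)$. Your convexity argument for the $\T$-average already contains the germ of this idea; what is missing is carrying it through the non-amenable-looking $\Z$-average state by state, which is exactly what Kantorovich duality provides.
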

\noindent
The idea behind the proof is the following:
\begin{enumerate}
\item Starting from $\Fo$, construct an observable $\Mo$ on $\widetilde\hi$ in such a way that the errors of its margins with respect to $\Qo$ and $\Po$ reflect the original errors.
\item Average $\Mo$ with respect to phase space translations so that the errors (actually, the state dependent errors) do not increase. 
\item Project the averaged observable $\overline{\Mo}$ back to $\hi$ to get the desired result. 
\end{enumerate}
\begin{proof}
Let $\Fo:\bo{\T\times \Z}\to\lh$ be an observable with $d(\Fo_2,\No) <\infty$. Define an observable $\Mo:\bo{\T\times \Z}\to \mathcal{L} (\widetilde\hi)$ via 
\begin{equation}\label{eqn:M}
\Mo(X\times Y) = V \Fo(X\times Y) V^* +  \sum_{n=1}^\infty \ell(X)\langle n \vert \Fo_2(-Y) \vert n \rangle \vert e_{-n}\rangle \langle e_{-n}\vert. 
\end{equation}

We now proceed by calculating the error $d(\Mo_2,\Po)$ for the second margin $\Mo_2$. 
By Remark \ref{kalibrointi}, it is sufficient to take the supremum over the eigenstates $\vert e_k\rangle$ of $\Po$, and we have the probabilities
$$
{\bf p}_{e_k}^{\Mo_2} (Y) = \langle e_k \vert\Mo (\T\times Y) e_k\rangle = \left\{ \begin{array}{ll}
\langle k \vert \Fo_2(Y) \vert k \rangle & \text{ for }k\geq 0,\\
\langle -k \vert \Fo_2(-Y) \vert -k \rangle & \text{ for }k < 0.
\end{array}\right.
$$
Since ${\bf p}_{e_k}^{\Po} = \delta_{k}$, we have 
$$
d({\bf p}_{e_k}^{\Mo_2}, {\bf p}_{e_k}^{\Po} ) = \left( \sum_{l=-\infty}^\infty \vert l-k\vert^2  \, {\bf p}_{e_k}^{\Mo_2}(\{ l \}) \right)^{1/2}
$$
so that for $k\geq 0$, 
$$
d({\bf p}_{e_k}^{\Mo_2}, {\bf p}_{e_k}^{\Po} ) = \left(   \sum_{l=-\infty}^\infty \vert l-k\vert^2  \,  \langle k \vert \Fo_2(\{ l\}) \vert k \rangle    \right)^{1/2}  = d({\bf p}_{\vert k\rangle}^{\Fo_2}, {\bf p}_{\vert k\rangle}^{\No} ) 
$$
whereas for $k<0$ we have
\begin{align*}
d({\bf p}_{e_k}^{\Mo_2}, {\bf p}_{e_k}^{\Po} ) &= \left(   \sum_{l=-\infty}^\infty \vert l-k\vert^2  \,  \langle - k \vert \Fo_2(\{ - l\}) \vert - k \rangle    \right)^{1/2}  \\
&= \left(   \sum_{l=-\infty}^\infty \vert l -(- k)\vert^2  \,  \langle - k \vert \Fo_2(\{  l\}) \vert - k \rangle    \right)^{1/2}  \\
&= d({\bf p}_{\vert -k\rangle}^{\Fo_2}, {\bf p}_{\vert -k\rangle}^{\No} ) 
\end{align*}
Since $d(\Fo_2,\No)$ is also obtained by calculating the supremum over the number states $\vert k\rangle$, we have that 
\begin{equation}\label{eqn:M_2}
d(\Mo_2,\Po) = \sup_{k\in\Z} d({\bf p}_{e_k}^{\Mo_2}, {\bf p}_{e_k}^{\Po} ) = \sup_{k\in\N}d({\bf p}_{\vert k\rangle}^{\Fo_2}, {\bf p}_{\vert k\rangle}^{\No} ) = d(\Fo_2,\No).
\end{equation}

For the first margin, we do not get such an equality due to the trivial term coming from the last term in Eq.~\eqref{eqn:M}. However, we may  restrict to the states 
$$
\mathcal{S}_+ (\widetilde\hi) = \{ \rho \in\mathcal{S}(\widetilde\hi)\mid \langle e_k\vert \rho e_l\rangle = 0 \text{ for all }k<0\text{ or }l<0    \}
$$
so that $V^* \mathcal{S}_+ (\widetilde\hi) V = \mathcal{S}(\hi)$. 
Since for any $\rho\in \mathcal{S}_+ (\widetilde\hi)$ we have $\tr{\rho \Mo_1( Y)} = \tr{V^*\rho V \Fo_1(X)}$ and $\tr{\rho\Qo(X)}  = \tr{VV^*\rho VV^*\Qo(X)} = \tr{V^*\rho V \Phi(X) }$, we have, in particular,  that 
\begin{equation}\label{eqn:M_1}
d(\Fo_1,\Phi)  =   \sup_{\rho\in\mathcal{S}(\hi)} d({\bf p}_\rho^{\Fo_1}, {\bf p}_\rho^\Phi) = \sup_{\rho\in\mathcal{S}_+(\widetilde\hi)} d({\bf p}_{V^*\rho V}^{\Fo_1}, {\bf p}_{V^*\rho V}^\Phi)  = \sup_{\rho\in\mathcal{S}_+(\widetilde\hi)} d({\bf p}_{\rho}^{\Mo_1}, {\bf p}_{\rho }^\Qo).     
\end{equation}

The next step is to average the observable $\Mo$ with respect to phase space translations, and to show that the averaged  observable $\overline{\Mo}$ satisfies 
\begin{equation}\label{eqn:condition}
 \sup_{\rho\in\mathcal{S}_+(\widetilde\hi)} d({\bf p}_{\rho}^{\overline{\Mo}_1}, {\bf p}_{\rho }^\Qo) =\sup_{\rho\in\mathcal{S}_+(\widetilde\hi)} d({\bf p}_{\rho}^{\Mo_1}, {\bf p}_{\rho }^\Qo)  \quad\text{ and } \quad  d(\overline{\Mo}_2,\Po) 
 = d(\Mo_2,\Po) 
\end{equation}
 We perform the averaging by using an invariant mean $m$ on $\T\times\Z$, see, for instance, 
 \cite{HeRo}.
 For any trace class operator $T\in\mathcal{T}(\widetilde\hi)$ and any bounded continuous function $f:\T\times \Z\to\C$, define 
 $$
\Theta [T, f](\theta, k) = \tr{T W(\theta,k)^* \Mo(f^{(\theta,k)})W(\theta,k) }
$$
where $W(\theta,k)$ are the Weyl operators and $f^{(\theta,k)}$ denotes the translate of $f$. Then $\Theta [T, f]:\T\times \Z\to\C$ is a bounded continuous function, and by standard arguments the formula 
$$
\tr{T\overline{\Mo}(f)} = m\left( \Theta [T, f] \right)
$$
determines a covariant phase space observable $\overline{\Mo}:\mathcal{B}(\T\times \Z)\to\mathcal{L}(\widetilde\hi)$ 
(since $d(\Mo_2,\Po) = d(\Fo_2,\No)<\infty$ and $d(\Mo_1,\Phi)<\infty $ trivially by the compactness of $\T$, the normalization of $\overline{\Mo}$ is guaranteed \cite{RFW2004}).

Let $\rho\in  \mathcal{S}(\widetilde\hi)$. Then by the Kantorovich duality, for any bounded continuous functions $f,g:\T\to\R$ such that $f(\theta)- g(\theta')\leq d(\theta,\theta')^2$ we have 
$$
\tr{\rho (\Mo_1(f) - \Qo(g))} \leq d({\bf p}_{\rho}^{\Mo_1}, {\bf p}_{\rho }^\Qo).
$$
Since the above class of functions is invariant with respect to translations, we have 
\begin{align*}
\tr{W(\theta,k)\rho W(\theta,k)^* (\Mo_1(f^{(\theta)}) - \Qo(g^{(\theta)}))} &=\tr{\rho W(\theta,k)^* \Mo_1(f^{(\theta)}) W(\theta, k)} - \tr{\rho \Qo(g)}  \\
&\leq d({\bf p}_{\rho}^{\Mo_1}, {\bf p}_{\rho }^\Qo), 
\end{align*}
or equivalently, 
$$
\tr{\rho W(\theta,k)^* \Mo(f_1^{(\theta,k)}) W(\theta, k)} \leq \tr{\rho \Qo(g)}  + d({\bf p}_{\rho}^{\Mo_1}, {\bf p}_{\rho }^\Qo)
$$
where $f_1(\alpha,l) = f(\alpha)$. By applying the invariant mean, we obtain
$$
\tr{\rho \overline{M}_1(f)}  - \tr{\rho \Qo(g)}  \leq d({\bf p}_{\rho}^{\Mo_1}, {\bf p}_{\rho }^\Qo)
$$
for all $f,g$. By taking the supremum over such functions we get
$$
d({\bf p}_{\rho}^{\overline{\Mo}_1}, {\bf p}_{\rho }^\Qo) \leq d({\bf p}_{\rho}^{\Mo_1}, {\bf p}_{\rho }^\Qo)
$$
for all $\rho\in  \mathcal{S}(\widetilde\hi)$. The same holds also for the second margin. Hence, we conclude that Eq.~\eqref{eqn:condition} holds.

Since $\overline{\Mo}$ is a covariant phase space observable, we know that $\overline{\Mo} = \Go^\sigma$ for some $\sigma\in  \mathcal{S}(\widetilde\hi)$. We now set $\Eo^\sigma = V^* \Go^\sigma V = V^* \overline{\Mo}V$, so that 
\begin{align*}
d(\Eo^\sigma_1,\Phi) &=  d(V^*\overline{\Mo}_1 V, V^*\Qo V)= \sup_{\rho \in\mathcal{S}(L^2(\R))} d({\bf p}_{V\rho V^*}^{\overline{\Mo}_1}, {\bf p}_{V\rho V^*}^{\Qo}) = \sup_{\rho \in\mathcal{S}_+(\widetilde\hi)} d({\bf p}_{\rho }^{\overline{\Mo}_1}, {\bf p}_{\rho }^{\Qo})  \\
&\leq \sup_{\rho \in\mathcal{S}_+(\widetilde\hi)} d({\bf p}_{\rho}^{\Mo_1}, {\bf p}_{\rho }^\Qo) = d(\Fo_1,\Phi)
\end{align*}
and similarly $d(\Eo^\sigma_2,\No) \leq d(\Fo_2, \No) $.

\end{proof}

For any $\Fo$ for which $d(\Fo_2,\No)$ is finite there is thus an $\Eo^\sigma$  such that
$d(\Eo^\sigma_1,\Phi) \leq d(\Fo_1,\Phi)$ and  $d(\Eo^\sigma_2,\No) \leq d(\Fo_2, \No)$, so that\footnote{Recall that due to the arc distance on $\T$, the error
$\Qo_\sigma[2]=\int_{-\pi}^\pi\theta^2\,d\Qo_\sigma(\theta)$ so that also the operator $Q^2=\int_{-\pi}^\pi\theta^2\,d\Qo(\theta)$.}
$$
d(\Fo_1,\Phi)^2+d(\Fo_2, \No)^2 \geq  d(\Eo^\sigma_1,\Phi)^2+d(\Eo^\sigma_2,\No) ^2= \Qo_\sigma[2]+\Po_\sigma[2]
= \tr{\sigma(Q^2+P^2)}\geq \widetilde E_0,
$$
where $\widetilde E_0>0$ is the smallest eigenvalue of the oscillator energy operator $Q^2+P^2$ in $\widetilde\hi$. Though the existence of $\widetilde E_0$ is known, we can only give its  approximate value $\widetilde E_0\approx 0.9996$ (see Appendix \ref{liite}). If $\psi\in\widetilde\hi$ is a corresponding eigenvector then $\Eo^{\kb\psi\psi}$ is an optimal joint measurement of $\Phi$ and $\No$ with the value space  $\T\times\Z$.
For a detailed analysis of the boundary curve of the convex hull of the monotone hull of the error sets $\mathsf{MU}(\T\times \Z)$ we refer to \cite{BKWerner2016}, in particular, its Sections IV, V, and VI.

\begin{remark}{\rm
By extending the value space of the approximate joint measurements from $\T\times \N$ to $\T\times\Z$, we are potentially enlarging also the initial error set.
This leaves us with a question if the inclusion 
$\mathsf{MU}(\T\times \N) \subseteq \mathsf{MU}(\T\times \Z)$
is a proper one.
Natural candidates for optimal joint observables on $\T\times \N$ are the observables $\Eo^\sigma$ whose support is contained in $\T\times \N$. This amounts to the requirement that the generating  operator $\sigma \in  \mathcal{S}(\widetilde\hi)$ is supported on the positive number states, that is, $\langle e_k| \sigma e_l\rangle = 0$ wherever $k<0$ or $l<0$. Optimizing over such states is equivalent to optimizing the preparation uncertainties for $\Phi$ and $\No$ over all states $\rho\in  \mathcal{S}(\hi)$. Based on numerical calculations, the uncertainties lead to a strict subset of $\mathsf{MU}(\T\times \Z)$ giving evidence that this inclusion could be a proper one. However, we are lacking an argument which would show that these are indeed optimal $\T\times \N$ valued approximate joint observables. We are thus also left with the problem of
proving or disproving that the optimal $\T\times \N$ valued approximate joint observables for $\Phi$ and $\No$ are given by those $\Eo^\sigma$ whose support is contained in $\T\times\N$.}
\end{remark}

\section*{Acknowledgments}
JS acknowledges financial support from the EU through the Collaborative Projects QuProCS (Grant Agreement No. 641277).

\appendix 
\section{Proof of the existence of the eigenvalue}\label{liite}
In this appendix we give a simple proof of the well-known fact that the operator $P^2+Q^2$ in $\widetilde\hi$, as well as the operator $N^2+\Phi[2]$ in $\hi$, has a discrete spectrum with a strictly positive lowest eigenvalue.
For that end, we fix a separable Hilbert space (with the identity $I$) and assume that all operators (bounded or not) act in this space. We let $\mathcal B$ denote the unit ball of the Hilbert space.

\begin{lemma}\label{lemma3}
Let $E$ and $F$ be bounded operators such that $0\le E\le F\le I$ and $\|I-E\|<1$. Then $E$ and $F$ are  invertible and $E^{-1}\ge F^{-1}\ge I$.
\end{lemma}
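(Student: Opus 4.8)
The plan is to establish the two claims in sequence: first invertibility, then the operator inequality $E^{-1} \geq F^{-1} \geq I$. The hypothesis $\|I - E\| < 1$ is the key that unlocks invertibility of $E$, via the standard Neumann series argument. Since $E$ is a positive operator with $\|I - E\| < 1$, its spectrum lies in the interval $(0, 2)$, bounded away from $0$, so $E$ is invertible with $E^{-1} = \sum_{k=0}^\infty (I-E)^k \geq 0$. Because $0 \leq E \leq F \leq I$, the spectrum of $F$ also lies in $[\,\|E\|_{\min}, 1\,]$ away from zero — more precisely, $F \geq E$ forces $\langle f | F f\rangle \geq \langle f | E f \rangle \geq (1 - \|I-E\|)\|f\|^2$ for all $f$, so $F$ is bounded below by a strictly positive multiple of $I$ and hence invertible as well, with $F^{-1} \geq 0$.

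\emph{Step two: the inequalities.} Since $F \leq I$, for any $f$ write $g = F^{-1/2} f$ (using the positive square root of the invertible positive operator $F$); then $\langle f | F^{-1} f \rangle = \langle g | g \rangle \geq \langle g | F g \rangle = \langle f | f \rangle$, which gives $F^{-1} \geq I$. For the remaining inequality $E^{-1} \geq F^{-1}$, I would invoke the operator-monotonicity of the map $A \mapsto -A^{-1}$ on positive invertible operators (equivalently, the antitonicity of inversion): from $0 < E \leq F$ it follows that $E^{-1} \geq F^{-1}$. If one prefers a self-contained derivation, sandwich: $0 \leq E \leq F$ implies $F^{-1/2} E F^{-1/2} \leq F^{-1/2} F F^{-1/2} = I$, hence $\|F^{-1/2} E F^{-1/2}\| \leq 1$, so $(F^{-1/2} E F^{-1/2})^{-1} = F^{1/2} E^{-1} F^{1/2} \geq I$, and conjugating by $F^{-1/2}$ yields $E^{-1} \geq F^{-1}$.

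\emph{Main obstacle.} There is no real obstacle here — this is a routine lemma about operator monotonicity of inversion, and the only mild care needed is the justification that inversion reverses the order on positive invertible operators (the sandwich argument above handles it without appealing to external facts). The hypothesis $\|I - E\| < 1$ does nothing beyond guaranteeing $E$ (and therefore $F$) is bounded away from zero, which is exactly what is needed to make $E^{-1}$ and $F^{-1}$ bounded operators; the inequality $E^{-1} \geq F^{-1} \geq I$ then holds as soon as everything is invertible.
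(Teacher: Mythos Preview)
Your proof is correct and follows essentially the same route as the paper: Neumann series for the invertibility of $E$, then the sandwich $G=F^{-1/2}EF^{-1/2}\le I$ to deduce $G^{-1}\ge I$ and hence $E^{-1}\ge F^{-1}$. The only minor difference is in how the implication ``$0<G\le I\Rightarrow G^{-1}\ge I$'' is justified: the paper re-establishes $\|I-G\|<1$ (by a short contradiction argument) and reuses the Neumann series, whereas you invoke the direct conjugation/spectral argument, which is a little more streamlined but amounts to the same thing.
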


\begin{proof}
Since $\|I-E\|<1$
it follows that
$\lim_{s\to\infty}\|I-E\|^s=0$, and
$
I+\sum_{k=1}^\infty(I-E)^k
$
converges in the operator norm to a bounded operator. Moreover, $$
\underbrace{E}_{I-(I-E)}[I+\sum_{k=1}^{s-1}(I-E)^k]=I-(I-E)^s\to I
$$ when $s\to\infty$, so that
$$
E^{-1}=I+\sum_{k=1}^\infty(I-E)^k\ge I.
$$
Indeed, $(I-E)^k=\int_{0}^{\|I-E\|}x^k\d\Mo(x)\ge 0$, for all $k=1,2,\ldots$, where $\Mo$ is the spectral measure of $I-E\ge 0$.
Since $0\le I-F\le I-E$ it follows that $\|I-F\|=\sup_{\psi\in\mathcal B}\<\psi|(I-F)\psi\> \le\|I-E\|<1$, and (similarly as above) one sees that $F$ is invertible.
Let $F^{1/2}$ (resp.\ $F^{-1/2}$) be the square root operators of $F$ (resp.\ $F^{-1}\ge I\ge 0$).
Now $G=F^{-1/2}EF^{-1/2}\ge0$ is invertible with the inverse 
$G^{-1}=F^{1/2}E^{-1}F^{1/2}$ and the condition $E\le F$ is equivalent to $G\le I$.
Now $\|I-G\|<1$ since otherwise (i.e.\  if $\|I-G\|=1$) there would exist a sequence $\{\psi_n\}_{n=1}^\infty\subset\mathcal B$ of unit vectors such that 
$\lim_{n\to\infty}\<\psi_n|(I-G)\psi_n\>=1$, that is, $\|G^{1/2}\psi_n\|^2=\<\psi_n|G\psi_n\>\to0$, $n\to\infty$, and
thus $1=\|\psi_n\|=\|G^{-1/2}G^{1/2}\psi_n\|\le\|G^{-1/2}\|\,\|G^{1/2}\psi_n\|\to 0$ when $n\to\infty$.
Hence, by the above calculation,
$G^{-1}\ge I$ so that
$
E^{-1}=F^{-1/2}G^{-1}F^{-1/2}\ge F^{-1}.
$
\end{proof}

\begin{proposition}\label{alaraja}
Let $T$ be a positive (possibly unbounded) selfadjoint operator with a purely discrete non-degenerate spectrum. Assume that its eigenvalues $0\leq p_0<p_1<p_2<\ldots$ are such that $\sum_n (1+p_n)^{-1}<\infty$.
Let $V$ be a  positive bounded  operator.
Then the spectrum of $H=T+V$ is discrete.
The lowest eigenvalue of $H$ is zero if and only if $p_0=0$ and $V\phi_0=0$ where $\phi_0\ne 0$ is an eigenvector of $T$ related to the eigenvalue $p_0$.
\end{proposition}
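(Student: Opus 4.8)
\textbf{Proof proposal for Proposition \ref{alaraja}.}

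The plan is to show discreteness of the spectrum of $H=T+V$ by proving that $H$ has compact resolvent, and then to analyse the lowest eigenvalue via a strict positivity argument away from the kernel. First I would observe that since $T$ is positive selfadjoint with discrete spectrum, $(I+T)^{-1}$ is a bounded positive operator; the summability hypothesis $\sum_n (1+p_n)^{-1}<\infty$ says precisely that $(I+T)^{-1}$ is trace class, hence compact. Since $V$ is bounded and positive, $H=T+V$ is selfadjoint on the domain of $T$, positive, and $I+H\ge I+T>0$, so by Lemma \ref{lemma3} (applied after rescaling: $E=(I+H)^{-1}(I+T) \cdot$ — more directly, from $I\le I+T\le I+H$ one gets $0\le (I+H)^{-1}\le (I+T)^{-1}\le I$). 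Thus $(I+H)^{-1}$ is dominated by a compact positive operator and is therefore itself compact; consequently $H$ has purely discrete spectrum accumulating only at $+\infty$.

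Next I would identify the lowest eigenvalue. Since $H\ge 0$, its lowest eigenvalue $E_0$ satisfies $E_0=\inf_{\|\psi\|=1,\ \psi\in D(T)}\langle\psi|H\psi\rangle = \inf\big(\langle\psi|T\psi\rangle+\langle\psi|V\psi\rangle\big)$. If $p_0=0$ with eigenvector $\phi_0$ and $V\phi_0=0$, then $\langle\phi_0|H\phi_0\rangle=0$, so $E_0=0$ and $\phi_0$ is a ground state. Conversely, suppose $E_0=0$ with normalized ground state $\psi$. Then $\langle\psi|T\psi\rangle+\langle\psi|V\psi\rangle=0$, and since both terms are nonnegative we get $\langle\psi|T\psi\rangle=0$ and $\langle\psi|V\psi\rangle=0$. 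From $\langle\psi|T\psi\rangle=0$ and $T\ge 0$ we obtain $T^{1/2}\psi=0$, hence $T\psi=0$; since the spectrum of $T$ is nonnegative this forces $p_0=0$ and $\psi\in\ker T=\C\phi_0$ (using non-degeneracy). From $\langle\psi|V\psi\rangle=0$ and $V\ge 0$ we get $V^{1/2}\psi=0$, hence $V\psi=0$, i.e. $V\phi_0=0$. This establishes the stated equivalence.

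The main obstacle — and the only place requiring care — is the compactness step: one must be sure that the operator inequality $0\le (I+H)^{-1}\le (I+T)^{-1}$ genuinely holds in the unbounded setting and that domination of a self-adjoint operator by a compact positive operator transfers compactness. For the inequality, since $I+T$ and $I+H$ are both bounded below by $I$ and invertible with bounded inverses, one reduces to the bounded statement "$A\le B$, $A,B\ge I$ $\Rightarrow$ $B^{-1}\le A^{-1}$", which is exactly the content of Lemma \ref{lemma3} after noting $\|I-B^{-1}\|<1$ for any $B\ge I$. For the transfer of compactness, if $0\le S\le K$ with $K$ compact selfadjoint and $S$ selfadjoint, then $\|S^{1/2}\xi\|^2=\langle\xi|S\xi\rangle\le\langle\xi|K\xi\rangle$ shows $S^{1/2}$ maps bounded sets into sets where $K^{1/2}$-norms are controlled, giving $S^{1/2}$ (hence $S=S^{1/2}S^{1/2}$) compact; applying this with $S=(I+H)^{-1}$, $K=(I+T)^{-1}$ finishes it. Everything else is the routine variational computation above.
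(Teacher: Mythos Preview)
Your overall strategy coincides with the paper's: establish $(I+H)^{-1}\le (I+T)^{-1}$, conclude that $(I+H)^{-1}$ is compact (indeed trace class), and then run the variational argument for the lowest eigenvalue, which is verbatim the paper's final paragraph.

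There is, however, a genuine slip in the step you flag as ``the only place requiring care''. You write that the bounded statement ``$A\le B$, $A,B\ge I$ $\Rightarrow$ $B^{-1}\le A^{-1}$'' is ``exactly the content of Lemma~\ref{lemma3} after noting $\|I-B^{-1}\|<1$ for any $B\ge I$''. That norm claim is false when $B$ is unbounded: here $B=I+H$ has spectrum accumulating at $+\infty$, so $B^{-1}$ has $0$ in its spectrum and $\|I-B^{-1}\|=1$. Moreover Lemma~\ref{lemma3} takes $E\le F$ as \emph{hypothesis} and gives $E^{-1}\ge F^{-1}$ as \emph{conclusion}; you need the implication in the opposite direction, which requires a rescaling that is only available when $B$ is bounded. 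So as written Lemma~\ref{lemma3} does not apply.

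The paper closes exactly this gap by a sandwich trick: set $W=T+\|V\|I+I$, so that $I+T\le I+H\le W$ with $W^{-1/2}$ bounded; conjugating by $W^{-1/2}$ produces bounded operators $0\le W^{-1/2}(I+T)W^{-1/2}\le W^{-1/2}(I+H)W^{-1/2}\le I$ to which Lemma~\ref{lemma3} applies directly (the lower one has $\|I-\,\cdot\,\|=\|V\|/(p_0+\|V\|+1)<1$), yielding $(I+T)^{-1}\ge(I+H)^{-1}$ and hence $\sum_n\langle\phi_n|(I+H)^{-1}\phi_n\rangle\le\sum_n(1+p_n)^{-1}<\infty$. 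Once you insert this device (or simply cite operator monotonicity of $t\mapsto t^{-1}$ on $[1,\infty)$ for selfadjoint operators), your argument goes through; the compactness-transfer step via $S^{1/2}=CK^{1/2}$ is fine, though the paper more economically reads off trace class from the diagonal sum.
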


\begin{proof}
If the Hilbert space is finite dimensional then the proof is trivial so we consider only an infinite dimensional case.
By assumption,
$T=\sum_{n=0}^\infty p_n\kb{\phi_n}{\phi_n}$ for an 
orthonormal basis $\{\phi_n\}$. The domain of $T$ is ${\cal D}=\left\{\sum_{n=0}^\infty c_n\phi_n\,\Big|\,\sum_{n=0}^\infty p_n^2|c_n|^2<\infty\right\}$.
Now
 $(T+I)^{-1}=\sum_{n=0}^\infty p'_n\kb{\phi_n}{\phi_n}$, with $p_n'= (1+p_n)^{-1}\in(0,1]$,
 is a positive trace class operator.
Define $W= T+\|V\|\, I+I$ on $\mathcal D$ so that
$$
W^{-1/2}=\sum_{n=0}^\infty\frac1{\sqrt{p_n+\|V\|+1}}\kb{\phi_n}{\phi_n}
$$
is a bounded operator with the norm $\|W^{-1/2}\|=\sup_n(p_n+\|V\|+1)^{-1/2}=(p_0+\|V\|+1)^{-1/2}$.
Let $A=T+I$ and $B=T+V+I$ be positive operators defined on $\mathcal D$. Since $V\le\|V\| I$ one gets
$0\le\<\psi|A\psi\>\le\<\psi|B\psi\>\le\<\psi|W\psi\>$, $\psi\in\mathcal V={\rm lin}\{\phi_n\}\subset\mathcal D$, or, since $W^{-1/2}\mathcal V\subset\mathcal V$, 
$$
0\le W^{-1/2}AW^{-1/2}\le W^{-1/2}BW^{-1/2}\le I
$$ 
where, e.g.\ $W^{-1/2}BW^{-1/2}$ is a bounded operator determined uniquely by the corresponding bounded sesquilinear form 
$\mathcal V\times\mathcal V\ni(\fii,\psi)\mapsto\<W^{-1/2}\fii|BW^{-1/2}\psi\>\in\C$.

Since $\|I-W^{-1/2}AW^{-1/2}\|=\sup_n\left(\frac{\|V\|}{p_n+\|V\|+1}\right)= \frac{\|V\|}{p_0+\|V\|+1}<1$, from Lemma \ref{lemma3}, one sees that 
$$
(W^{-1/2}AW^{-1/2})^{-1}\ge (W^{-1/2}BW^{-1/2})^{-1}\ge I,
$$
that is, $p_n'=\<\phi_n|(T+I)^{-1}\phi_n\>\ge\<\phi_n|(T+V+I)^{-1}\phi_n\>\ge (p_n+\|V\|+1)^{-1}>0$
and 
$$\sum_{n=0}^\infty\<\phi_n|W^{-1}\phi_n\>\le\sum_{n=0}^\infty\<\phi_n|(T+V+I)^{-1}\phi_n\>\le\sum_{n=0}^\infty p'_{n}<\infty$$
showing that
$(T+V+I)^{-1}\ge W^{-1}$ is a (positive) trace-class operator. 
Let  
$$
(T+V+I)^{-1}=\sum_{l=0}^\infty\lambda_l\kb{\fii_l}{\fii_l}
$$
where $\{\fii_l\}$ is an orthonormal basis and $\lambda_l\in(0,1]$, $\sum_{l=0}^\infty\lambda_l<\infty$.
Hence,
$$
H=T+V=\sum_{l=0}^\infty q_l\kb{\fii_l}{\fii_l}
$$ 
where $q_l=\lambda_l^{-1}-1\ge 0$. Finally, let $\phi\in\mathcal D$. Then, $H\phi=0$ if and only if $0=\<\phi|H\phi\>=\<\phi|T\phi\>+\<\phi|V\phi\>$ if and only if $\<\phi|T\phi\>=0=\<\phi|V\phi\>$ if and only if $T\phi=0=V\phi$.
\end{proof}

Note that, in the context of the above Proposition,
all operators $T+cV, c>0$, have discrete spectra, and their spectra have non-zero smallest eigenvalues (i.e.\ positive spectra) if $T+V$ has a positive spectrum.

In either case, $\widetilde{H}=P^2+Q^2=P^2+\int_{-\pi}^\pi\theta^2d\Qo(\theta)$ (in $\widetilde\hi$) or $H=N^2+\Phi[2]=N^2+\int_{-\pi}^\pi\theta^2d\Phi(\theta)$ (in $\hi$), the assumptions of Proposition \ref{alaraja} are satisfied; in particular, both of the positive operators $Q^2$ or $\Phi[2]$ have a purely continuous spectrum (with no eigenvalues): $\sigma(Q^2)=\sigma(\Phi[2])=[0,{\pi^2}]$. 
Hence both operators $\widetilde H, \, H$ have strictly positive lowest eigenvalues $\widetilde E_0, \, E_0$, respectively. 
Also, this follows directly from Proposition \ref{alaraja} by noting that
$\<e_0|Q^2e_0\>=\<0|\Phi[2]|0\>=\int_{-\pi}^{\pi}\theta^2d\theta/(2\pi)>0$, i.e.\ $P^2e_0=0$ but $Q^2e_0\ne0$ and $N^2|0\>=0$ but $\Phi[2]|0\>\ne 0$.
Numerically, $\widetilde E_0\approx0.9996...$ associated with the (normalized) eigenvector
$\tilde\psi_{\rm min}=\sum_{s=-\infty}^\infty c_{s}e_s$ where
$c_0\approx0.7518$, $c_{\pm1}\approx 0.4550$, $c_{\pm2}\approx0.1017$, $c_{\pm3}\approx0.0083$, $c_{\pm4}\approx0.0002$, etc.\
Moreover, $E_0\approx1.5818...$ with the eigenvector
$
\psi_{\rm min}\approx 0.7276 \ket0 +0.6632 \ket1 +0.1745 \ket2 +0. 0167\ket3 +0.0002 \ket4 +\ldots .
$
To conclude,
if $\Mo:\bo{\T\times\Z}\to\lh$ is any approximate joint measurement of $\Phi$ and $\No$, with $d(\Mo_2,\No)<\infty$, then
$$
 d(\Mo_1,\Phi)+d(\Mo_2,\No) \geq \widetilde E_0\approx 1.
$$
It remains, however,  an open question if the eigenvalue $E_0$ of $N^2+\Phi[2]$ bounds the error sum $d(\Mo_1,\Phi)+d(\Mo_2,\No)$ for the 
${\T\times\N}$-valued
approximate joint measurements  of phase and number.

\begin{remark}
The above numerical results for the smallest eigenvalues and the corresponding eigenvectors is based on the following facts:
Let $H=T+V$, $T=\sum_{n=0}^\infty p_n\kb{\phi_n}{\phi_n}$, be as in Proposition \ref{alaraja} (we assume that the Hilbert space is infinite-dimensional).
Let $C_{\psi_{\rm min}}\ge 0$ be the lowest eigenvalue of $H$ with the (normalized) eigenvector $\psi_{\rm min}$.
Let $P_k=\sum_{n=0}^k\kb{\phi_n}{\phi_n}$ so that $P_k\to I$, $k\to\infty$, with respect to the strong (and weak) operator topology. Denote $H_k=P_k H P_k\ge 0$ and let $\alpha_k$ be the smallest eigenvalue of the `finite positive matrix' $H_k$.
Let $\eta_k\in\mathcal B$, $P_k\eta_k=\eta_k$, be the corresponding eigenvector of $H_k$, that is, $H_k\eta_k=\alpha_k\eta_k$. 
Since $\alpha_k=\inf\{\<\psi|H_k\psi\>\,|\,\psi\in\mathcal B,\,P_k\psi=\psi\}$ and $P_{k+1}P_k=P_k$ one gets
$$
C_{\psi_{\rm min}}\le \<\eta_{k+1}|H\eta_{k+1}\>=\alpha_{k+1}\le\alpha_k\le
\<P_k\psi_{\rm min}|H_kP_k\psi_{\rm min}\>\|P_k\psi_{\rm min}\|^{-2}.
$$
Since $\lim_{k\to\infty}\|P_k\psi_{\rm min}\|=1$, to get $\lim_{k\to\infty}\alpha_k=C_{\psi_{\rm min}}$, 
one is left to show that (when $k\to\infty$)
$$
\<P_k\psi_{\rm min}|H_kP_k\psi_{\rm min}\>=\<P_k\psi_{\rm min}|HP_k\psi_{\rm min}\>\to\<\psi_{\rm min}|H\psi_{\rm min}\>=C_{\psi_{\rm min}}
$$
or\footnote{
$\<\psi|\cdots\psi\>=\<P_k\psi|\cdots P_k\psi\>+\<P_k^\perp\psi|\cdots P_k\psi\>+\<P_k\psi|\cdots P_k^\perp\psi\>+\<P_k^\perp\psi|\cdots P_k^\perp\psi\>$ where $P_k^\perp=I-P_k$
}
that
$
HP_k\psi_{\rm min}\to H\psi_{\rm min}=C_{\psi_{\rm min}}\psi_{\rm min}.
$
But this is obvious (see the end of the proof of the proposition):
$$
\|H\psi_{\rm min}-HP_k\psi_{\rm min}\|^2=\sum_{l=1}^\infty(q_l)^2\underbrace{|\<\fii_l|(I-P_k){\psi_{\rm min}}\>|^2}_{\to\;0\;(k\to\infty)}\to0.
$$
We have proved that $\lim_{k\to\infty}\alpha_k=C_{\psi_{\rm min}}$, i.e.\
$\lim_{k\to\infty} \<\eta_{k}|H\eta_{k}\>=\<\psi_{\rm min}|H\psi_{\rm min}\>$.
Hence, one can numerically solve the smallest eigenvalues $\alpha_k$ of the finite matrices $H_k$. When $k$ is large enough one gets  $C_{\psi_{\rm min}}\approx\alpha_k$.

\end{remark}

\end{document}